\newtheorem{thm}{Theorem}[section]
\newtheorem{prop}[thm]{Proposition}
\newtheorem{cor}{Corollary}[section]
\newtheorem{defn}{Definition}[section]
\newtheorem{exmp}{Example}[section]
\newtheorem{rem}{Remark}[section]
\title{\LARGE \bf
Black-box stability analysis of hybrid systems with sample-based multiple Lyapunov functions
}
\author{Adrien Banse, Zheming Wang and Raphaël M. Jungers
\thanks{The authors are with the ICTEAM Institute, UCLouvain, Louvain-la-Neuve, 1348, Belgium. R Jungers is a FNRS honorary Research Associate. This project has received funding from the European Research Council (ERC) under the European Union’s Horizon 2020 research and innovation programme under grant agreement No 864017
- L2C. He is also supported by the Walloon Region and the Innoviris Foundation.
Email addresses: \texttt{adrien.banse@student.uclouvain.be}, \texttt{\{zheming.wang, raphael.jungers\}@uclouvain.be}}}
\begin{document}

\maketitle
\thispagestyle{empty}
\pagestyle{empty}

\begin{abstract}
We present a framework based on multiple Lyapunov functions to find probabilistic data-driven guarantees on the stability of unknown constrained switching linear systems (CSLS), which are switching linear systems whose switching signal is constrained by an automaton. The stability of a CSLS is characterized by its constrained joint spectral radius (CJSR). Inspired by the scenario approach and previous work on unconstrained switching systems, we characterize the number of observations needed to find sufficient conditions on the (in-)stability of a CSLS using the notion of CJSR. More precisely, our contribution is the following: we derive a probabilistic upper bound on the CJSR of an unknown CSLS from a finite number of observations. We also derive a deterministic lower bound on the CJSR. From this we obtain a probabilistic method to characterize the stability of an unknown CSLS.
\end{abstract}

\section{\textsc{Introduction}}
Due to major technological upheavals, the complexity of many dynamical systems has dramatically increased in recent years, thus making their control more and more challenging. The academic community has coined this paradigm shift under the name of the \emph{Cyber-Physical revolution} (see \cite{s150304837, inproceedings, Kim2012CyberPhysicalSA, bookref,doi }). In particular, \emph{Hybrid systems}, which often appear in Cyber-Physical applications, are dynamical systems whose dynamics are characterized by continuous and discrete behaviours. 

In many practical applications, the engineer cannot rely on having a model, but rather has to analyse the underlying system in a \emph{data-driven} fashion. Most classical data-driven methods (see e.g. \cite{Karimi_2017, 710876, CAMPI200366}) are limited to linear systems and rely on classical identification and frequency-domain approaches. These methods may not well suited for Cyber-Physical systems because of the natural complexity of these systems. Novel data-driven stability analysis methods have been recently developed based on \emph{scenario optimization} (see \cite{ken, berger, RUBBENS202167}). In this paper we seek to take one more step towards complexity.

We consider data-driven stability analysis of discrete-time \emph{switching linear systems}. Dynamics of a switching linear system defined by a set of matrices $\mathbf{\Sigma} = \{A_i\}_{i \in \{1, \dots, m\}}$ is given by the following equation: 
\begin{equation}
	x_{t+1} = A_{\sigma(t)} x_t
\end{equation}
for any $t \in \mathbb{N}$, where $x_t \in \mathbb{R}^n$ and $\sigma(t) \in \{1, \dots, m\}$ are respectively the \emph{state} and the \emph{mode} at time $t$. The sequence $(\sigma(0), \sigma(1), \dots) \subseteq \{1, \dots, m\}^{\mathbb{N}}$ is the \emph{switching sequence}. 

Switching linear systems are an important family of hybrid systems which often arise in Cyber-Physical systems (see \cite{tabuada}). Stability analysis of switching linear systems is challenging due to the hybrid behaviour caused by the switches. In recent years, many model-based stability analysis techniques have been proposed (see \cite{linhai, jungers_2009_the} and references therein).

In particular, we are interested in the stability of \emph{constrained switching linear systems} (\emph{CSLS} for short). A CSLS is a switching linear system with logical rules on its switching sequence. We represent these rules by an \emph{automaton} (see Definition~\ref{autodef}). White-box stability of CSLS has also been studied extensively (see e.g. \cite{DAI20121099, PHILIPPE2016242, xu2020approximation}). In particular, we are interested in asymptotic stability of CSLS, whose definition is given as follows.
Given an automaton $\mathbf{G}$ and a set of matrices $\mathbf{\Sigma}$, the system $S(\mathbf{G}, \mathbf{\Sigma})$ is said to be \emph{asymptotically stable} (or \emph{stable}, for short) if, for all $x \in \mathbb{R}^n$ and for all infinite words $(\sigma(0), \sigma(1), \dots)$ accepted by $\mathbf{G}$, 
\begin{equation}
	\lim_{t \to \infty} A_{\sigma(t-1)} \dots A_{\sigma(0)}x = 0.
\end{equation}

In this work we extend the approaches in \cite{ken, berger, RUBBENS202167} by considering a larger state space. For a CSLS $S(\mathbf{G}(V, E), \mathbf{\Sigma})$, we consider that one can observe points in $\mathbb{R}^n \times V$ i.e., couples of state and node. This allows us to find probabilistic guarantees for the asymptotic stability of CSLS whose dynamics are unknown.

\textbf{Outline.} The rest of this paper is organized as follows. We introduce the problem that we tackle in Section~\ref{setting}, as well as all concepts needed to this end. We present our results in Section~\ref{main}. We first propose a formulation allowing us to do this in a data-driven fashion. We then propose a deterministic method to find sufficient condition for instability of black-box CSLS. Finally we find probabilistic guarantees on the stability of a CSLS whose dynamics are unknown. Results are illustrated on a numerical example in Section~\ref{numerical}.

\section{\textsc{Problem setting}}
\label{setting}

In this section, we introduce the notions necessary to formally write the problem tackled in this paper.

\subsection{Constrained joint spectral radius}
We first define an \emph{automaton} (see e.g. \cite{lind_marcus_1995}):
\begin{defn}
\label{autodef}
An automaton is a strongly connected\footnote{A strongly connected graph is a graph that has a path from each vertex to every other vertex. See \cite[Definition~2.2.13]{lind_marcus_1995} for a formal definition.}, directed and labelled graph $\mathbf{G}(V, E)$, where $V$ is the set of nodes and $E$ the set of edges. Note that we drop the writing of $V$ and $E$ when it is clear from the context. The edge $(u, v, \sigma) \in E$ between two nodes $u, v \in V$ carries the label $\sigma \in \{1, \dots, m\}$, where $m \in \mathbb{N}$ is the number of labels.
\end{defn}
In the context of CSLS, $\sigma$ maps to a mode of the system. A sequence of labels $(\sigma(0), \sigma(1), \dots)$ is a \emph{word} in the language \emph{accepted} by the automaton $\mathbf{G}$ if there is a path in $\mathbf{G}$ carrying the sequence as the succession of the labels on its edges. A CSLS defined on the set of matrices $\mathbf{\Sigma}$ and constrained by the automaton $\mathbf{G}$ is noted $S(\mathbf{G}, \mathbf{\Sigma})$.

Let us present an example of CSLS, inspired from \cite[Section~4]{PHILIPPE2016242}, in order to illustrate the notions defined above.

\begin{exmp}
\label{example}
Consider a plant that may experience control failures. Its dynamics is given by $x_{t+1} = A_{\sigma(t)} x_t$ where $A_{\sigma(t)} = A + BK_{\sigma(t)}$ with
\begin{equation}
A = \begin{pmatrix} 0.47 & 0.28 \\ 0.07 & 0.23  \end{pmatrix} \textrm{ and } B = \begin{pmatrix} 0 \\ 1 \end{pmatrix}.
\end{equation}
$K_\sigma(t)$ is described as follows. $K_1 = \begin{pmatrix} k_1 & k_2 \end{pmatrix}$ with $k_1 = -0.245$ and $k_2 = 0.135$, corresponds to the mode where the controller works as expected. $K_2 = \begin{pmatrix} 0 & k_2 \end{pmatrix}$ and $K_3 = \begin{pmatrix} k_1 & 0 \end{pmatrix}$ respectively correspond to the modes when the first and the second part of the controller fails. And $K_4 = \begin{pmatrix} 0 & 0 \end{pmatrix}$ corresponds to the mode when both parts fail. We consider as a constraint that the same part of the controller never fails twice in a row. This is modelled by the automaton $\mathbf{G}$, depicted in Figure~\ref{automaton}. 
\begin{figure}[h]
\centering
\label{automaton}
\begin{tikzpicture}[shorten >=1pt,node distance=1.8cm,on grid,auto] 
   \node[state] (1) {$i$}; 
   \node[state] (2) [above left = of 1] {$j$} ; 
   \node[state] (3) [below left = of 1] {$k$};
   \node[state] (4) [below right=of 1] {$l$}; 
   \path[->]
   (2) edge [right] node {3} (3)
   (3) edge [bend left] node {2} (2)
   (2) edge [bend left] node {1} (1)
   (1) edge [left] node {2} (2)
   (3) edge [left] node {1} (1)
   (1) edge [bend left] node {3} (3)
   (1) edge [bend left] node {4} (4)
   (4) edge [bend left] node {1} (1)
   (1) edge [out=-330,in=-300,looseness=8] node [above right] {1} (1);
\end{tikzpicture}
\caption{Automaton $\mathbf{G}$. No mode can fail twice in a row.}
\end{figure}
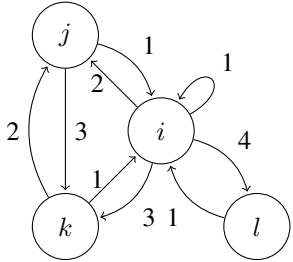

In this example, the considered CSLS is thus $S(\mathbf{G}, \mathbf{\Sigma})$ with $\mathbf{\Sigma} = \{A_1, A_2, A_3, A_4\}$.
\end{exmp}

The \emph{constrained joint spectral radius}, introduced in \cite{DAI20121099}, is defined as follows:

\begin{defn}[{\cite[Definition~1.2]{DAI20121099}}]
Given a set of matrices $\mathbf{\Sigma} = \{A_1, \dots, A_m\}$ and an automaton $\mathbf{G}$ whose labels $\sigma \in \{1, \dots, m\}$, the \emph{constrained joint spectral radius} (\emph{CJSR} for short) of the CSLS $S(\mathbf{G}, \mathbf{\Sigma})$ is defined as 
\begin{equation}
\begin{aligned}
    &\rho(\mathbf{G}, \mathbf{\Sigma}) = \lim_{t \to \infty} \max \{ \|A_{\sigma(t-1)}\dots A_{\sigma(0)} \|^{1/t} : \\ 
    &\quad \quad \quad \quad \quad (\sigma(0), \dots, \sigma(t-1)) \textrm{ is a word of } \mathbf{G} \}.
\end{aligned}
\end{equation}
\end{defn}

As the following proposition shows, the CJSR characterizes the stability of a CSLS:
\begin{prop}[{\cite[Corollary~2.8]{DAI20121099}}]
\label{stabcertif}
Given a set of matrices $\mathbf{\Sigma}$ and an automaton $\mathbf{G}$, the CSLS $S(\mathbf{G}, \mathbf{\Sigma})$ is asymptotically stable if and only if $\rho(\mathbf{G}, \mathbf{\Sigma}) < 1$.
\end{prop}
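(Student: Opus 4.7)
We establish the two implications separately.

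The sufficiency direction ($\rho(\mathbf{G},\mathbf{\Sigma}) < 1 \Rightarrow$ stability) is direct. Fix any $\alpha \in (\rho(\mathbf{G},\mathbf{\Sigma}), 1)$. By the definition of the CJSR as a limit, there exists $T \in \mathbb{N}$ such that, for every $t \geq T$ and every word of length $t$ accepted by $\mathbf{G}$, $\|A_{\sigma(t-1)} \cdots A_{\sigma(0)}\| \leq \alpha^t$. Hence $\|A_{\sigma(t-1)} \cdots A_{\sigma(0)} x\| \leq \alpha^t \|x\| \to 0$ for every $x \in \mathbb{R}^n$ and every accepted switching sequence, which is asymptotic stability.

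The converse (stability $\Rightarrow \rho(\mathbf{G},\mathbf{\Sigma}) < 1$) is more delicate: the task is to upgrade pointwise decay along individual trajectories into an exponential bound uniform in the accepted word. I would proceed in two steps. First, I would establish the existence of $T \in \mathbb{N}$ and $c < 1$ such that $\|A_{\sigma(T-1)} \cdots A_{\sigma(0)}\| \leq c$ for every accepted word of length $T$. Second, since any accepted word of length $kT$ can be sliced into $k$ consecutive accepted subwords of length $T$ (the corresponding subpaths in $\mathbf{G}$), submultiplicativity of the operator norm gives $\|A_{\sigma(kT-1)} \cdots A_{\sigma(0)}\| \leq c^k$; extending to arbitrary length by absorbing a bounded remainder of at most $T-1$ factors then yields $\rho(\mathbf{G},\mathbf{\Sigma}) \leq c^{1/T} < 1$.

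The existence of such $T$ and $c$, which is the main obstacle, would be established by contradiction using compactness. Assume no such pair exists: then for every $T$ there is an accepted word $w^{(T)}$ of length $T$ and a unit vector $x^{(T)} \in \mathbb{R}^n$ with $\|A_{w^{(T)}_{T-1}} \cdots A_{w^{(T)}_{0}} x^{(T)}\| \geq 1$. The set of infinite accepted sequences is a closed subset of the compact space $\{1,\ldots,m\}^{\mathbb{N}}$ (by K\"onig's lemma, as $\mathbf{G}$ has finitely many nodes and labels), and the unit sphere of $\mathbb{R}^n$ is compact, so a diagonal extraction yields an infinite accepted word $w^*$ and a unit vector $x^*$ such that $\|A_{w^{*}_{t-1}} \cdots A_{w^{*}_{0}} x^*\| \geq 1$ for infinitely many $t$, contradicting asymptotic stability of $S(\mathbf{G},\mathbf{\Sigma})$. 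The technical delicacy lies precisely in this diagonal extraction---namely, in interchanging limits so that a single limiting vector $x^*$ survives alongside a single limiting trajectory $w^*$, and in using the continuity of the finite products $A_{w^{*}_{t-1}} \cdots A_{w^{*}_0}$ with respect to the prefix $w^{(T)}$ agreeing with $w^*$ on its first $t$ letters.
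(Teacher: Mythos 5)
The paper does not actually prove this proposition: it imports it verbatim from \cite[Corollary~2.8]{DAI20121099}, so your attempt has to be measured against the standard proofs in that literature. Your sufficiency direction is correct (granting that the defining limit exists, which follows from Fekete's lemma because any accepted word of length $s+t$ splits into accepted words of lengths $s$ and $t$ along the corresponding path of $\mathbf{G}$), and your reduction of the converse to the existence of $T$ and $c<1$ with $\|A_{\sigma(T-1)}\cdots A_{\sigma(0)}\|\le c$ for every accepted word of length $T$, followed by slicing and submultiplicativity, is also sound.

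The gap sits exactly where you flag the ``technical delicacy'': the diagonal extraction does not produce the contradiction. Your hypothesis gives, for each $T$, a bound on the \emph{full-length} product, $\|A_{w^{(T)}_{T-1}}\cdots A_{w^{(T)}_{0}}x^{(T)}\|\ge 1$. After extracting $w^{(T_k)}\to w^{*}$ and $x^{(T_k)}\to x^{*}$, continuity of finite products only controls the prefix products $A_{w^{*}_{t-1}}\cdots A_{w^{*}_{0}}x^{*}$ for each \emph{fixed} $t$, while $w^{(T_k)}$ agrees with $w^{*}$ only on a prefix whose length is not comparable to $T_k$. To extract information about that prefix you must factor $A_{w^{(T_k)}} = A_{\mathrm{tail}}\,A_{\mathrm{prefix}}$, and the best available estimate is $\|A_{w^{*}_{t-1}}\cdots A_{w^{*}_{0}}x^{(T_k)}\|\ge 1/\|A_{\mathrm{tail}}\|$, where the tail is a product of $T_k-t$ matrices whose norm is in general unbounded in $k$; the inequality therefore degenerates to $\ge 0$ in the limit and contradicts nothing. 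This is not a removable technicality: upgrading pointwise decay of every trajectory to a uniform bound over accepted words is precisely the hard content of the proposition, and compactness of the word space and of the unit sphere alone cannot deliver it (each branch of the word tree being bounded does not bound the tree). The proofs in the literature use different machinery: either an extremal/Barabanov-type (multi)norm in the irreducible case --- which yields, for every $x\neq 0$, an accepted infinite word along which $\|A_{\sigma(t-1)}\cdots A_{\sigma(0)}x\|\ge \rho^{t}\|x\|$, so that $\rho\ge 1$ directly contradicts asymptotic stability --- combined with an induction over common invariant subspaces to handle reducibility, or a lift of the CSLS to an unconstrained switched system followed by the unconstrained version of the theorem. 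A self-contained argument needs one of these mechanisms in place of the diagonal extraction.
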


\subsection{Multiple Quadratic Lyapunov Functions}
We present a classical result from model-based analysis of CSLS. The following proposition gives a quadratic framework for approximating the CJSR of a given CSLS:

\begin{prop}[{\cite[Proposition~2.20]{MPthesis}}]
\label{mqlf}
Consider a CSLS $S(\mathbf{G}(V, E), \mathbf{\Sigma})$ and a constant $\gamma > 0$. If there exists a set of quadratic forms $\{P_i,\, i \in V\}$ satisfying the set of \emph{Linear Matrix Inequalities} (\emph{LMI}s)
\begin{equation}
    \forall (u, v, \sigma) \in E: A_\sigma^TP_vA_\sigma \preceq \gamma^2P_u, 
\end{equation}
then $n^{-1/2}\gamma \leq \rho(\mathbf{G}, \mathbf{\Sigma}) \leq \gamma$.
\end{prop}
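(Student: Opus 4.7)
The plan is to prove the two inequalities separately: the upper bound follows from a standard iterated Lyapunov argument, while the lower bound relies on John's ellipsoid theorem applied to extremal multiple norms.

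For the upper bound $\rho(\mathbf{G},\mathbf{\Sigma}) \leq \gamma$, I would take any word $(\sigma_0, \dots, \sigma_{t-1})$ accepted by $\mathbf{G}$, which corresponds to a path $u_0 \xrightarrow{\sigma_0} u_1 \xrightarrow{\sigma_1} \dots \xrightarrow{\sigma_{t-1}} u_t$. Applying the LMI from edge $(u_{k-1},u_k,\sigma_{k-1})$ and sandwiching the partial product iteratively gives
$$
M^T P_{u_t} M \preceq \gamma^{2t} P_{u_0}, \qquad M = A_{\sigma_{t-1}}\cdots A_{\sigma_0}.
$$
Bounding eigenvalues with $C := \max_v \lambda_{\max}(P_v)/\min_v \lambda_{\min}(P_v)$ yields $\|M\| \leq \gamma^t \sqrt{C}$, and since $C$ is independent of the word, taking $t$-th roots and $t\to\infty$ in the definition of $\rho(\mathbf{G},\mathbf{\Sigma})$ gives the desired upper bound.

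For the lower bound, I would use the fact that the CJSR is the infimum of parameters achievable by \emph{multiple} (possibly non-quadratic) norms: by a constrained Barabanov-type theorem for strongly connected automata, for every $\epsilon > 0$ there exist norms $\{|\cdot|_v : v \in V\}$ on $\mathbb{R}^n$ such that $|A_\sigma x|_v \leq (\rho+\epsilon)|x|_u$ for every $(u,v,\sigma) \in E$. For each $v$, John's ellipsoid theorem applied to the symmetric unit ball $B_v = \{|x|_v \leq 1\}$ produces a positive definite $P_v$ whose associated norm $|x|_{P_v} := (x^T P_v x)^{1/2}$ satisfies
$$
\frac{1}{\sqrt{n}}\,|x|_{P_v} \leq |x|_v \leq |x|_{P_v}.
$$
Composing these two sandwiches along an edge $(u,v,\sigma)$ gives $|A_\sigma x|_{P_v} \leq \sqrt{n}(\rho+\epsilon)|x|_{P_u}$, i.e.\ $A_\sigma^T P_v A_\sigma \preceq n(\rho+\epsilon)^2 P_u$. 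Thus the LMIs are feasible with parameter $\sqrt{n}(\rho+\epsilon)$, so the infimal feasible $\gamma$ (which is what Proposition~\ref{mqlf} implicitly refers to) satisfies $\gamma \leq \sqrt{n}\rho$, equivalently $n^{-1/2}\gamma \leq \rho$. Sending $\epsilon\downarrow 0$ finishes the argument.

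The main obstacle is the existence of the extremal multiple norms used in the lower bound: this is the nontrivial ingredient and requires the strong connectedness of $\mathbf{G}$ (guaranteed by Definition~\ref{autodef}) to prevent the candidate norms (built as suprema of rescaled products along admissible paths) from degenerating. If one can invoke this result from the cited literature, the rest is a clean application of John's theorem; otherwise, a careful construction of the extremal norms via the T-product (or the associated lifted unconstrained system) would be required before the ellipsoidal approximation step can proceed.
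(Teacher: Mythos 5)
Your proof is correct and follows essentially the same route as the source the paper cites for this result (the paper gives no proof of its own for Proposition~\ref{mqlf}, deferring to \cite[Proposition~2.20]{MPthesis}): the standard argument is exactly your iterated-LMI bound for the upper estimate, and extremal multiple norms combined with John's ellipsoid theorem for the $n^{-1/2}$ factor. You are also right to flag that the lower bound must be read as applying to the infimal feasible $\gamma$ (feasibility being upward-closed in $\gamma$), which is how the cited statement is meant.
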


If $\gamma < 1$, the set of norms $\{ \| \cdot \|_{P_u}, \, u \in V\}$ is called a set of \emph{Multiple Quadratic Lyapunov Functions} (\emph{MQLF}). Proposition~\ref{mqlf} thus gives a sufficient condition for the stability of a given CSLS using MQLF.

Consider a given CSLS $S(\mathbf{G}(V, E), \mathbf{\Sigma})$. Let $\Delta = \mathbb{S} \times E$, with $\mathbb{S} \subset \mathbb{R}^n$ the unit sphere. As a preparation to develop our data-driven approach, we reformulate the stability condition in Proposition \ref{mqlf} into a robust optimization problem\footnote{Note that we can restrict $x$ to the unit sphere $\mathbb{S}$ in constraint \eqref{lmis}. We can do this thanks to the \emph{homogeneity} of the CSLS: for any $x \in \mathbb{R}^n$, $\mu > 0$, and $A \in \mathbf{\Sigma}$, it holds that $A(\mu x) = \mu Ax$.}: 
\begin{subequations}
\label{PDelta}
\begin{align}
    \mathcal{P}(\Delta):& \min_{\substack{\{P_u, \, u \in V\} \\\gamma \geq 0}} \gamma  \\
    \textrm{s.t. } &\forall (x, (u, v, \sigma)) \in \Delta:
    (A_\sigma x)^TP_v(A_\sigma x) \leq \gamma^2 x^TP_ux \label{lmis} \\
                   &\forall u \in V: P_u \in \{ P : P \succ 0 \}.
\end{align}
\end{subequations}
We denote by $\gamma^*(\Delta)$ and $\{ P^*_u(\Delta),\, u \in V \}$ the solution of $\mathcal{P}(\Delta)$. Following Proposition~\ref{mqlf}, if $\gamma^*(\Delta) < 1$, the set $\{ P^*_u(\Delta),\, u \in V \}$ is a set of MQLF.

The notation $\mathcal{P}(\Delta)$ emphasizes that the whole set of constraints is known in this white-box formulation, in opposition to the black-box problem $\mathcal{P}(\omega_N)$ defined in~\eqref{PomegaN}.

\section{\textsc{Main results}}
\label{main}

\subsection{Data-driven formulation}
\label{formulation}

In this paper, we analyze the problem of approximating the CJSR in a data-driven fashion: we assume that the system is not known, hence problem $\mathcal{P}(\Delta)$ defined in Equation~\eqref{PDelta} cannot be solved. We only sample a finite number $N$ of observations of a given CSLS $S(\mathbf{G}(V, E), \mathbf{\Sigma})$. One observation consists in an ordered pair of points in the state space defined above i.e., $\mathbb{R}^n \times V$. The $i$-th observation is a couple of initial and final states and nodes. It is noted $((x_i, u_i), (y_i, v_i)) \in (\mathbb{R}^n \times V)^2$ where $(u_i, v_i, \sigma_i) \in E$ for some label $\sigma_i \in \{ 1, \dots, m \}$, and $y_i = A_{\sigma_i}x_i$. For any $i = 1, \dots, N$, $x_i$ and $(u_i, v_i, \sigma_i)$ are drawn randomly, uniformly and independently from respectively $\mathbb{S}$ and $E$. We attract the attention of the reader on the fact that the sampled mode is not known.

We define the sample set $\omega_N$ as 
\begin{equation}
    \omega_N = \{ (x_i, (u_i, v_i, \sigma_i), \, i = 1, \dots, N\},  
\end{equation}
where $x_i, u_i, v_i$ and $\sigma_i$ are as described above. Note that $\omega_N$ is a subset of $N$ elements of $\Delta$.

Now, for a given set $\omega_N$, let us define the \emph{sampled optimization problem} $\mathcal{P}(\omega_N)$:
\begin{subequations}
\label{PomegaN}
\begin{align}
    \mathcal{P}(\omega_N):& \min_{\substack{\{P_u, \, u \in V\} \\\gamma \geq 0}} \gamma \\
    \textrm{s.t. } &\forall (x, (u, v, \sigma)) \in \omega_N:
    (A_\sigma x)^TP_v(A_\sigma x) \leq \gamma^2 x^TP_ux \label{lmisSampled}\\
                   &\forall u \in V: P_u \in \{ P : I \preceq P \preceq CI \}, \label{compacitySampled} 
\end{align}
\end{subequations}
for a large $C \in \mathbb{R}_{\geq 0}$. We denote by $\gamma^*(\omega_N)$ and $\{ P^*_u(\omega_N),\, u \in V \}$ the solution of $\mathcal{P}(\omega_N)$. The problem that we tackle in this paper is the inference, with a user-defined confidence level, of $\gamma^*(\Delta)$, the solution of $\mathcal{P}(\Delta)$ defined in Equation~\eqref{PDelta} from the solution of $\mathcal{P}(\omega_N)$ defined in Equation~\eqref{PomegaN} i.e., the value $\gamma^*(\omega_N)$ and the set $\{P_u^*(\omega_N), \, u \in V\}$.

Problem $\mathcal{P}(\omega_N)$ defined in Equation~\eqref{PomegaN} differs from $\mathcal{P}(\Delta)$ defined in Equation~\eqref{PDelta} in two ways: the LMIs expressed in constraint \eqref{lmisSampled} are restricted to $\omega_N$, and compactness of the domain of the matrices $\{P_u, u \in V\}$  is imposed in constraint \eqref{compacitySampled}. We will need the latter to prove Proposition~\ref{cardinality}.

\subsection{Deterministic lower bound on the CJSR}

In the same fashion as in \cite{ken}, we derive a deterministic lower bound on the CJSR:
\begin{prop}
\label{lowerbound}
Let $\omega_N$ be a set of $N$ observations from $\Delta$ as explained above. Consider the program $\mathcal{P}(\omega_N)$ defined in \eqref{PomegaN} for the CSLS $S(\mathbf{G}, \mathbf{\Sigma})$ with optimal cost $\gamma^*(\omega_N)$. Then the following holds :
\begin{equation}
	n^{-1/2} \gamma^*(\omega_N) \leq \rho(\mathbf{G}, \mathbf{\Sigma}).
\end{equation}
\end{prop}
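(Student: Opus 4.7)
The plan is to prove the bound by a standard relaxation argument: since $\omega_N$ is a subset of the constraint index set $\Delta$, the program $\mathcal{P}(\omega_N)$ is (essentially) a relaxation of $\mathcal{P}(\Delta)$, so its optimal value cannot be larger. Combining this with the deterministic bound provided by Proposition~\ref{mqlf} applied to the full program $\mathcal{P}(\Delta)$ yields the claim.

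More concretely, first I would show the inequality $\gamma^*(\omega_N) \leq \gamma^*(\Delta)$. Let $(\{P_u\}, \gamma)$ be any feasible point of $\mathcal{P}(\Delta)$. The LMIs \eqref{lmis} are homogeneous in the matrices $P_u$, so for any $\alpha > 0$ the rescaled tuple $(\{\alpha P_u\}, \gamma)$ also satisfies them. Choosing $\alpha = 1/\min_{u\in V}\lambda_{\min}(P_u)$ gives $\alpha P_u \succeq I$ for every $u$, and assuming $C$ is large enough (specifically, $C \geq \alpha \max_{u\in V}\lambda_{\max}(P_u)$, which the statement allows via ``a large $C$''), we also have $\alpha P_u \preceq C I$. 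The restricted LMIs~\eqref{lmisSampled} are a subset of~\eqref{lmis}, hence $(\{\alpha P_u\}, \gamma)$ is feasible for $\mathcal{P}(\omega_N)$. Taking an infimum over feasible points of $\mathcal{P}(\Delta)$ yields $\gamma^*(\omega_N) \leq \gamma^*(\Delta)$.

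Next, apply Proposition~\ref{mqlf} to the optimal solution $(\{P_u^*(\Delta)\}, \gamma^*(\Delta))$ of the full program: since this solution satisfies the LMIs $A_\sigma^T P_v A_\sigma \preceq \gamma^*(\Delta)^2 P_u$ for every edge $(u,v,\sigma) \in E$, the proposition gives $n^{-1/2}\gamma^*(\Delta) \leq \rho(\mathbf{G},\mathbf{\Sigma})$. Chaining with the previous inequality,
\begin{equation*}
n^{-1/2}\gamma^*(\omega_N) \leq n^{-1/2}\gamma^*(\Delta) \leq \rho(\mathbf{G},\mathbf{\Sigma}),
\end{equation*}
which is exactly the claim.

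The main technical nuisance, rather than a real obstacle, is the mismatch between the domain constraints of $\mathcal{P}(\Delta)$ (merely $P_u \succ 0$) and of $\mathcal{P}(\omega_N)$ (the compact set $I \preceq P_u \preceq CI$): without the rescaling step above, a feasible point of $\mathcal{P}(\Delta)$ need not be feasible for $\mathcal{P}(\omega_N)$. The homogeneity of the LMIs absorbs this issue cleanly, provided $C$ is chosen large enough, which is precisely why the paper writes ``for a large $C$''. If strictness of the feasible set or non-attainment of the optimum in $\mathcal{P}(\Delta)$ is a concern, the same argument can be applied to an $\varepsilon$-optimal solution and $\varepsilon$ sent to zero.
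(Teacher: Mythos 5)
Your proof is correct and follows essentially the same route as the paper's: observe that $\mathcal{P}(\omega_N)$ is a relaxation of $\mathcal{P}(\Delta)$, so $\gamma^*(\omega_N) \leq \gamma^*(\Delta)$, and then invoke Proposition~\ref{mqlf}. You are in fact more careful than the paper, which simply asserts the relaxation without addressing the mismatch between the domain constraints $P_u \succ 0$ and $I \preceq P_u \preceq CI$; your rescaling argument via homogeneity fills in that implicit step.
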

\begin{proof}
Notice that $\mathcal{P}(\omega_N)$ defined in \eqref{PomegaN} is a relaxation of $\mathcal{P}(\Delta)$ defined in \eqref{PDelta}. As a consequence, we have $\gamma^*(\Delta) \geq \gamma^*(\omega_N)$. Following Proposition~\ref{mqlf}, 
\begin{equation}
	\rho(\mathbf{G}, \mathbf{\Sigma}) \geq n^{-1/2} \gamma^*(\Delta) \geq n^{-1/2} \gamma^*(\omega_N),
\end{equation}
which is the desired result.
\end{proof}

\begin{rem} 
One can show that the lower bound of Proposition~\ref{lowerbound} can be improved thanks to \emph{Sums-of-Squares approximation methods}, introduced in \cite{Parrilo2008ApproximationOT} for the approximation of the \emph{joint spectral radius} and generalized in \cite{PHILIPPE2016242} for the CJSR.
\end{rem}

\subsection{Probabilistic upper bound on the CJSR}
\label{probupper}

\begin{prop}
\label{cardinality}
Consider the program $\mathcal{P}(\Delta)$ for the CSLS $S(\mathbf{G}(V, E), \mathbf{\Sigma})$ with optimal cost $\gamma^*(\Delta)$. There exists a set $\omega \subset \Delta$ with $|\omega| = |V| n(n+1)/2$ such that $\gamma^*(\omega) = \gamma^*(\Delta)$, where $\gamma^*(\omega)$ is the optimal cost of the program $\mathcal{P}(\omega)$.
\end{prop}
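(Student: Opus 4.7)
The plan is a Helly-type finite-subcover argument exploiting the dimension of the primal variables. The matrices $\{P_u : u \in V\}$ live in $\bigoplus_{u \in V} \mathcal{S}^n$, a real vector space of dimension $d := |V|\,n(n+1)/2$, and the inequalities in \eqref{lmis} are homogeneous under the scaling $(P_u) \mapsto t(P_u)$ for $t>0$, so the effective primal space may be reduced by one dimension via this symmetry.

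First, I would substitute $\tilde\gamma = \gamma^2$ to turn each constraint of \eqref{lmis} into a linear inequality in $(\{P_u\},\tilde\gamma)$, reformulating $\mathcal{P}(\Delta)$ as a semi-infinite linear program. Fixing the optimum $(\{P_u^\star\},\gamma^\star)$ of $\mathcal{P}(\Delta)$ and using the scaling symmetry, I would rescale the $P_u^\star$ so that $I \preceq P_u^\star \preceq CI$ for a sufficiently large $C$, making $(\{P_u^\star\},\gamma^\star)$ feasible for the compactness clause \eqref{compacitySampled}.

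Second, I would apply Helly's theorem on the $(d-1)$-dimensional projective quotient of the $P$-space by the scaling action. For any $\gamma < \gamma^\star$ the feasible set of $\mathcal{P}(\Delta)$ is empty, hence also empty in the projective quotient; Helly then yields a subset of at most $(d-1)+1 = d$ constraints from $\Delta$ whose projective intersection is already empty. A compactness argument (using the compactness of $\Delta = \mathbb{S}\times E$ and continuity of the constraints in $\gamma$) upgrades this to a single subset $\omega \subset \Delta$ with $|\omega| = d$ such that $\gamma^*(\omega) \geq \gamma^\star$.

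Finally, combining $\gamma^*(\omega) \leq \gamma^*(\Delta) = \gamma^\star$ (trivial, since $\omega \subset \Delta$ restricts to fewer constraints) with the reverse inequality just established yields $\gamma^*(\omega) = \gamma^*(\Delta)$. The main obstacle is the limit step: Helly gives, for each $\gamma < \gamma^\star$, a family of $d$ constraints that \emph{depends} on $\gamma$, and it is non-trivial to consolidate these into a single $\omega$ that works uniformly as $\gamma \uparrow \gamma^\star$. The compactness clause \eqref{compacitySampled}, together with the compactness of $\Delta$, is precisely what enables the extraction of a convergent subfamily and hence a single admissible $\omega$.
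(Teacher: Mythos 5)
Your argument is correct in outline and is essentially the paper's: the paper first obtains a set of $|V|n(n+1)/2+1$ constraints via the Helly-type support-constraint argument of \cite[Lemma~1]{RUBBENS202167} and then removes one constraint by exploiting the homogeneity of the LMIs in $\{P_u\}$ as in \cite[Theorem~2]{berger}, whereas you fold that homogeneity reduction into the dimension count before applying Helly and reach $|V|n(n+1)/2$ in one step. The technical points you flag --- realizing the projective quotient as a compact affine slice (which is where the compactness clause \eqref{compacitySampled} is needed, so that the infinite-family version of Helly's theorem applies to the closed half-spaces) and consolidating the $\gamma$-dependent subfamilies in the limit $\gamma \uparrow \gamma^*(\Delta)$ via compactness of $\Delta$ and continuity of $\omega \mapsto \gamma^*(\omega)$ --- are exactly the ones the paper delegates to the cited references.
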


A proof of Proposition~\ref{cardinality} is provided in Appendix~\ref{appProof}.

\begin{rem}
\label{better}
There are two main differences between Proposition~\ref{cardinality} and \cite[Lemma~1]{RUBBENS202167}: the proposition is derived for CSLS instead of arbitrary switching linear systems, and the cardinality of the set is the number of variables of the program minus 1, while it is the number of variables of the program in \cite{RUBBENS202167}.
\end{rem}

Now, let us define the notion of \emph{spherical cap}:

\begin{defn}[\cite{li}]
The \emph{spherical cap} on $\mathbb{S}$, the unit sphere, of direction $c$ and measure $\varepsilon$ is defined as $\mathcal{C}(c, \varepsilon) := \left\{ x \in \mathbb{S} : c^Tx > \| c \| \delta(\varepsilon) \right\}$, where $\delta(\varepsilon)$ is defined as\footnote{
In Equation~\eqref{cap}, $I^{-1}(y; a, b)$ is the \emph{inversed regularized incomplete beta function} (see \cite{Majumder1973InverseOT}). Its ouput is $x > 0$ such that $I(x; a, b) = y$, where $I$ is defined as 
\begin{equation}
	I(\cdot; a, b) : \mathbb{R}_{>0} \to \mathbb{R}_{>0} : x \mapsto I(x; a, b) = \frac{\int_{0}^x t^{a-1} (1-t)^{b-1} \textrm{d}t}{\int_{0}^1 t^{a-1} (1-t)^{b-1} \textrm{d}t}
\end{equation} 
}
\begin{equation}
	\delta(\varepsilon) = \sqrt{1 - I^{-1}\left( 2\varepsilon ; (n-1)/2, 1/2 \right)}.
	\label{cap}
\end{equation}
\end{defn}

The following proposition provides a bound on the conservatism of the sampled problem $\mathcal{P}(\omega_N)$ defined in \eqref{PomegaN}, with respect to the white-box problem $\mathcal{P}(\Delta)$ defined in \eqref{PDelta} as a function of $N$, the number of points sampled:

\begin{prop}
\label{dist}
Consider the program $\mathcal{P}(\Delta)$ for the CSLS $S(\mathbf{G}(V, E), \mathbf{\Sigma})$ with optimal cost $\gamma^*(\Delta)$. Let $\omega_N = \{ (x_i, (u_i, v_i, \sigma_i)),i = 1, \dots, N \}$ be a set of $N$ samples from $\Delta$ as explained above. Suppose $N \geq |V|n(n+1)/2$. Then, for all $\varepsilon \in (0, 1]$, with probability at least
\begin{equation}
	\beta(\varepsilon, m, N) = 1 - |V|\frac{n(n+1)}{2}\left(1 - \frac{\varepsilon}{m |V|} \right)^N, 
\end{equation}
there exists a set $\omega'_N = \{ (x'_i, (u_i, v_i, \sigma_i)),i = 1, \dots, N \} \subset \Delta$ such that $\gamma^*(\omega'_N) = \gamma^*(\Delta)$ with $\| x_i - x'_i \| \leq \sqrt{2 - 2\delta(\varepsilon)}$.
\end{prop}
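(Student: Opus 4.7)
The plan is to combine the sample compression bound from Proposition~\ref{cardinality} with a probabilistic covering argument on the sphere. First, by Proposition~\ref{cardinality}, fix a witness set $\omega^* = \{(x^*_k, (u^*_k, v^*_k, \sigma^*_k)) : k = 1, \dots, K\} \subset \Delta$ of cardinality $K = |V|n(n+1)/2$ satisfying $\gamma^*(\omega^*) = \gamma^*(\Delta)$. I will show that with probability at least $\beta(\varepsilon, m, N)$, for each $k$ one can find an index $i_k \in \{1,\dots,N\}$ with $(u_{i_k}, v_{i_k}, \sigma_{i_k}) = (u^*_k, v^*_k, \sigma^*_k)$ and $x_{i_k} \in \mathcal{C}(x^*_k, \varepsilon)$. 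Setting $x'_{i_k} = x^*_k$ for such indices and $x'_i = x_i$ otherwise will produce the desired $\omega'_N$.

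The probability estimate proceeds as follows. Each sample is drawn uniformly and independently from $\mathbb{S} \times E$, and the cap $\mathcal{C}(x^*_k, \varepsilon)$ has spherical measure $\varepsilon$ by construction, so
\[
\Pr\bigl[(u_i, v_i, \sigma_i) = (u^*_k, v^*_k, \sigma^*_k) \textrm{ and } x_i \in \mathcal{C}(x^*_k, \varepsilon)\bigr] = \frac{\varepsilon}{|E|}.
\]
Since the automaton has at most $m$ outgoing edges per node, $|E| \leq m|V|$, and thus the probability that none of the $N$ i.i.d.~samples matches element $k$ is bounded above by $(1 - \varepsilon/(m|V|))^N$. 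A union bound over the $K$ witness points then yields $\beta(\varepsilon, m, N)$ for the probability that a matching index exists for every $k$.

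For the distance bound, if $x_i \in \mathcal{C}(x^*_k, \varepsilon)$ and both vectors lie on $\mathbb{S}$, then $(x^*_k)^T x_i > \delta(\varepsilon)$, so $\|x_i - x^*_k\|^2 = 2 - 2(x^*_k)^T x_i \leq 2 - 2\delta(\varepsilon)$. Finally, by construction $\omega^* \subseteq \omega'_N \subseteq \Delta$, so the feasible region of $\mathcal{P}(\omega'_N)$ is sandwiched between those of $\mathcal{P}(\Delta)$ and $\mathcal{P}(\omega^*)$, giving $\gamma^*(\omega^*) \leq \gamma^*(\omega'_N) \leq \gamma^*(\Delta)$ and hence equality.

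The subtle point I expect to be the main obstacle is ensuring that the $K$ matching indices $i_k$ can be chosen \emph{distinct} when several elements of $\omega^*$ share a common edge label: the union bound only guarantees at least one matching sample per element, not a system of distinct representatives. Since $N \geq K$, a greedy assignment or a Hall-type argument should resolve this, but it is the one step that does not drop out of the probability computation itself.
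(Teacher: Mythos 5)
Your argument is essentially the paper's own: the paper proves this proposition by citing \cite[Proposition~2]{RUBBENS202167} and listing exactly the three modifications you have implemented (a witness set of cardinality $|V|n(n+1)/2$ from Proposition~\ref{cardinality}, an edge-sampling probability of at least $1/(m|V|)$, and the correspondingly improved constant in the union bound), and your cap-measure computation and sandwiching step $\gamma^*(\omega^*) \leq \gamma^*(\omega'_N) \leq \gamma^*(\Delta)$ fill in that template correctly. The distinctness issue you flag at the end is a genuine loose thread --- $N \geq K$ plus a greedy assignment does not by itself yield a system of distinct representatives, since Hall's condition does not follow from the per-witness union bound --- but the paper's deferred proof does not address this point either, so your reconstruction is faithful to (and more candid than) the published argument.
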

The proof of Proposition~\ref{dist} follows the same lines as the one of \cite[Proposition~2]{RUBBENS202167} except for three points. First the number of variables of the problem is not the same. Second, given that the edges are sampled uniformly (c.f. Section~\ref{setting}), the probability of sampling a certain label $\sigma$ is at least $1/(m|V|)$, while it is $1/m$ in the unconstrained case. Third, Proposition~\ref{cardinality} allows to improve the probability $\beta$ according to Remark~\ref{better}. 

We now apply a sensitivity analysis approach in order to obtain from Proposition~\ref{dist} a probabilistic upper bound on $\gamma^*(\Delta)$ the optimal cost of $\mathcal{P}(\Delta)$ (defined in Equation~\eqref{PDelta}) from the sampled optimal variables $\gamma^*{\omega_N}$ and $\{ P^*_u(\omega_N), u \in V \}$ of $P(\omega_N)$ (defined in Equation~\eqref{PomegaN}).

\begin{thm}
\label{1sttheorem}
Consider the program $\mathcal{P}(\Delta)$ defined in \eqref{PDelta} for the CSLS $S(\mathbf{G}(V, E), \mathbf{\Sigma})$ with optimal cost $\gamma^*(\Delta)$. Let $\omega_N$ be a set of $N$ samples from $\Delta$ as explained in Section~\ref{formulation}, with $N \geq |V|n(n+1)/2$. Consider the sampled program $\mathcal{P}(\omega_N)$ defined in \eqref{PomegaN} with solution $\gamma^*(\omega_N)$ and $\{ P_u^*(\omega_N), u \in V \}$. For any $\beta \in [0, 1)$, let 
\begin{equation}
\label{varepsilon}
	\varepsilon = m |V| \left( 1 - \sqrt[N]{\frac{2(1-\beta)}{|V|n(n+1)}} \right).
\end{equation}
Then, with probability at least $\beta$, 
\begin{equation}
\label{bound1sttheorem}
\begin{aligned}
	&\gamma^*(\Delta) 	\leq  \gamma^*(\omega_N) \, + \\
					& \max_{(x, (u, v, \sigma)) \in \omega_N} \left\{ \sqrt{\frac{\lambda_{\max}^u}{\lambda_{\min}^u}} \gamma^*(\omega_N) + \sqrt{\frac{\lambda_{\max}^v}{\lambda_{\min}^u}} \mathcal{A}(\mathbf{\Sigma}) \right\} d(\varepsilon), 
\end{aligned}
\end{equation}
 
with $d(\varepsilon) = \sqrt{2 - 2\delta(\varepsilon)}$, $\lambda_{\min}^u$ and $\lambda_{\max}^u$ respectively the minimal and maximal eigenvalue of $P^*_u(\omega_N)$, and 
\begin{equation}
\label{maxnorm}
	\mathcal{A}(\mathbf{\Sigma}) = \max_{A \in \mathbf{\Sigma}} \| A \|.
\end{equation}
\end{thm}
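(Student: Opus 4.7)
The plan is to combine the geometric statement of Proposition~\ref{dist} with a perturbation analysis of the constraints~\eqref{lmisSampled}. First, solving the equation $\beta(\varepsilon, m, N) = \beta$ for $\varepsilon$ yields exactly the expression~\eqref{varepsilon}. Hence, by Proposition~\ref{dist}, with probability at least $\beta$ there exists a set $\omega'_N = \{(x'_i, (u_i, v_i, \sigma_i)),\, i = 1, \dots, N\} \subset \Delta$ satisfying $\gamma^*(\omega'_N) = \gamma^*(\Delta)$ together with $\|x_i - x'_i\| \leq d(\varepsilon)$ for every $i$. The remaining work is to convert the geometric closeness of $\omega_N$ and $\omega'_N$ into a closeness between $\gamma^*(\omega_N)$ and $\gamma^*(\omega'_N) = \gamma^*(\Delta)$.

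The key step is to show that the pair $(\gamma', \{P^*_u(\omega_N),\, u \in V\})$ is feasible for $\mathcal{P}(\omega'_N)$, where $\gamma'$ is the right-hand side of~\eqref{bound1sttheorem}. Writing $\|y\|_P := \sqrt{y^T P y}$ for any $P \succ 0$, constraint~\eqref{lmisSampled} at $(x_i, (u_i, v_i, \sigma_i)) \in \omega_N$ reads $\|A_{\sigma_i} x_i\|_{P^*_{v_i}} \leq \gamma^*(\omega_N)\|x_i\|_{P^*_{u_i}}$. I would apply the triangle inequality in the two ellipsoidal norms involved, namely
\begin{equation*}
\|A_{\sigma_i} x'_i\|_{P^*_{v_i}} \leq \|A_{\sigma_i} x_i\|_{P^*_{v_i}} + \|A_{\sigma_i}(x'_i - x_i)\|_{P^*_{v_i}}
\end{equation*}
and symmetrically $\|x_i\|_{P^*_{u_i}} \leq \|x'_i\|_{P^*_{u_i}} + \|x_i - x'_i\|_{P^*_{u_i}}$, and bound each perturbation term using $\|y\|_P \leq \sqrt{\lambda_{\max}(P)}\,\|y\|$, the operator-norm bound~\eqref{maxnorm}, and $\|x_i - x'_i\| \leq d(\varepsilon)$.

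To turn these into a ratio, I would normalize by $\|x'_i\|_{P^*_{u_i}}$: since~\eqref{compacitySampled} enforces $P^*_{u_i} \succeq I$ and $x'_i \in \mathbb{S}$, one has $\|x'_i\|_{P^*_{u_i}} \geq \sqrt{\lambda_{\min}^{u_i}}$, which converts each factor $\sqrt{\lambda_{\max}^{\cdot}}$ inherited from the previous step into a factor $\sqrt{\lambda_{\max}^{\cdot}/\lambda_{\min}^{u_i}}$. Chaining these inequalities and taking the maximum over the $N$ samples yields $\|A_{\sigma_i} x'_i\|_{P^*_{v_i}} \leq \gamma' \|x'_i\|_{P^*_{u_i}}$ with $\gamma'$ equal to the right-hand side of~\eqref{bound1sttheorem}. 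Feasibility of $(\gamma', \{P^*_u(\omega_N)\})$ for $\mathcal{P}(\omega'_N)$ then gives $\gamma^*(\Delta) = \gamma^*(\omega'_N) \leq \gamma'$, which is the claim.

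I expect the main obstacle to be the bookkeeping in the sensitivity step: one must propagate the triangle-inequality errors through the two ellipsoidal norms $\|\cdot\|_{P^*_{u_i}}$ and $\|\cdot\|_{P^*_{v_i}}$ simultaneously and then normalize so that exactly the two factors $\sqrt{\lambda_{\max}^u/\lambda_{\min}^u}$ and $\sqrt{\lambda_{\max}^v/\lambda_{\min}^u}$ claimed in~\eqref{bound1sttheorem} appear, rather than the looser combinations that a naive use of $\|x_i\|_{P^*_{u_i}} \leq \sqrt{\lambda_{\max}^{u_i}}$ would produce. Once this bookkeeping is done carefully, the rest is a direct application of Proposition~\ref{dist} together with the inversion of the formula for $\beta(\varepsilon, m, N)$.
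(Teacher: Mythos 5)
Your proposal is correct and follows essentially the same route as the paper's proof: invert $\beta(\varepsilon, m, N) = \beta$ to obtain \eqref{varepsilon}, invoke Proposition~\ref{dist} to get the perturbed set $\omega'_N$ with $\gamma^*(\omega'_N) = \gamma^*(\Delta)$ and $\|x_i - x'_i\| \leq d(\varepsilon)$, and then perform the same triangle-inequality sensitivity analysis in the norms $\|\cdot\|_{P^*_{u}}$ and $\|\cdot\|_{P^*_{v}}$, normalizing by $\|x'_i\|_{P^*_{u_i}} \geq \sqrt{\lambda_{\min}^{u_i}}$ (which only needs $x'_i \in \mathbb{S}$ and $P^*_{u_i} \succ 0$, not \eqref{compacitySampled}) so that exactly the two factors $\sqrt{\lambda_{\max}^u/\lambda_{\min}^u}$ and $\sqrt{\lambda_{\max}^v/\lambda_{\min}^u}$ appear. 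The concluding feasibility argument for $(\gamma', \{P^*_u(\omega_N)\})$ in $\mathcal{P}(\omega'_N)$ is precisely how the paper closes the proof, so no gap remains.
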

\begin{proof}
For the sake of readibility, let $\gamma = \gamma^*(\omega_N)$ and $P_u = P^*_u(\omega_N)$ for any $u \in V$. By definition, for any $(x, (u, v, \sigma)) \in \omega_N$, 
\begin{equation}
	\|A_\sigma x \|_{P_v} \leq \gamma \| x \|_{P_u}.
\end{equation}
Consider now for any $P \in \mathcal{S}^n$ its \emph{Cholesky decomposition} $P = L^TL$, where $\mathcal{S}^n$ is the set of \emph{positive semi-definite} \emph{symmetric} matrices. Then the following holds: 
\begin{equation}
	\|x\|_P = \| Lx \| \leq \| L \| \| x \| \leq \sqrt{\lambda_{\max}(P)}\| x \|, 
\end{equation}
where $\lambda_{\max}(P)$ is the maximal eigenvalue of $P$.
Let us now consider an arbitrary constraint $(y, (u, v, \sigma)) \in \Delta$, and define $y = x + \Delta x$ with $(x, (u, v, \sigma)) \in \omega_N$. Then, for any $(x, (u, v, \sigma)) \in \omega_N$, it holds that
\begin{equation}
\begin{aligned}
\label{ineq}
&\| A_\sigma (x + \Delta x) \|_{P_v} \leq \|A_\sigma x \|_{P_v} + \| A_\sigma \Delta x \|_{P_v} \\
&\quad \leq \gamma \|x\|_{P_u} + \|A_\sigma \Delta x \|_{P_v} \\
&\quad = \gamma \| (x + \Delta x) - \Delta x \|_{P_u} + \|A_\sigma \Delta x \|_{P_v} \\
&\quad \leq \gamma \| x + \Delta x \|_{P_u} + \gamma \| \Delta x \|_{P_u} + \|A_\sigma \Delta x \|_{P_v} \\
&\quad \leq \gamma \| x + \Delta x \|_{P_u} + \gamma \| \Delta x \| \sqrt{\lambda_{\max}^u} \\
&\hspace{2.5cm} + \| A_\sigma \| \| \Delta x \|  \sqrt{\lambda_{\max}^v}  \\ 
&\quad \leq \gamma \| x + \Delta x \|_{P_u} + \gamma \| \Delta x \| \sqrt{\lambda_{\max}^u} \frac{\|x + \Delta x\|_{P_u}}{\sqrt{\lambda_{\min}^u}} \\
&\hspace{2.5cm} +  \| A_\sigma \| \| \Delta x \| \sqrt{\lambda_{\max}^v} \frac{\|x + \Delta x\|_{P_u}}{\sqrt{\lambda_{\min}^u}} \\
&\quad = \left[ \gamma + \left( \sqrt{\frac{\lambda_{\max}^u}{\lambda_{\min}^u}} \gamma + \sqrt{\frac{\lambda_{\max}^v}{\lambda_{\min}^u}} \|A_\sigma \| \right) \|\Delta x\| \right] \\
& \hspace{5cm} \| x + \Delta x \|_{P_u}.
\end{aligned}
\end{equation}
For any $\beta \in [0, 1)$, let $\varepsilon$ be defined such as in Equation~\eqref{varepsilon}, then, given that $N \geq |V|n(n+1)/2$, Proposition~\ref{dist} guarantees the existence of a set $\omega'_N$ with $N$ points such that $\gamma^*(\omega'_N) = \gamma^*(\Delta)$ with probability at least $\beta$, and such that for any $(x, (u, v, \sigma)) \in \omega_N$, there exists $\Delta x$ such that $(x + \Delta x, (u, v, \sigma)) \in \omega'_N$ and $\| \Delta x \| \leq d(\varepsilon)$. Hence, by definition and following Equation~\eqref{ineq},
\begin{equation}
\label{boundwithA}
\begin{aligned}
	\gamma^*(\Delta) 	&= \gamma^*(\omega'_N) \\
					&\leq  \gamma \, + \\
					& \max_{(x, (u, v, \sigma)) \in \omega_N} \left\{ \sqrt{\frac{\lambda_{\max}^u}{\lambda_{\min}^u}} \gamma + \sqrt{\frac{\lambda_{\max}^v}{\lambda_{\min}^u}} \mathcal{A}(\mathbf{\Sigma}) \right\} d(\varepsilon), 
\end{aligned}
\end{equation}
with probability at least $\beta$.
\end{proof}

\subsection{Estimation of the maximal norm}

In order to get a data-driven probabilistic bound as expressed in Equation~\eqref{boundwithA}, it remains to approximate $\mathcal{A}(\mathbf{\Sigma})$ as defined in Equation~\eqref{maxnorm}. First, note that the following holds \cite[Proposition~2.7]{jungers_2009_the}:
\begin{equation}
\label{muDelta}
\begin{aligned}
	\mathcal{A}(\mathbf{\Sigma}) &= \eta^*(\Delta) \\&= \min_{\eta \geq 0} \eta \textrm{ s.t. } \forall (x, (u, v, \sigma)) \in \Delta: \|A_{\sigma} x \| \leq \eta.
\end{aligned}
\end{equation}
As it is assumed that $\mathbf{\Sigma}$ is not known, in this subsection, we seek to find a probabilistic upper bound on the value of $\mathcal{A}(\mathbf{\Sigma})$, from the given set of observations $\omega_N$. With the same idea as in Section~\ref{probupper}, let us infer the value of $\eta^*(\Delta) = \mathcal{A}(\mathbf{\Sigma})$ from the solution of its sampled problem
\begin{equation}
\label{muomegaN}
	\eta^*(\omega_N) = \min_{\eta \geq 0} \eta \textrm{ s.t. } \forall (x, (u, v, \sigma)) \in \omega_N: \|A_{\sigma} x \| \leq \eta, 
\end{equation}
with a user-defined confidence level. 

The general \emph{chance-constrained} theorem \cite[Theorem~6]{berger} requires a technical assumption \cite[Assumption~8]{berger} that can be violated in our case. We give a proof for Theorem~\ref{2ndtheorem} allowing us to get rid of this assumption.

\begin{thm}
\label{2ndtheorem}
	Let $\omega_N$ be a set of $N$ samples from $\Delta$ as explained in Section~\ref{formulation}. Consider the solutions $\eta^*(\Delta)$ and $\eta^*(\omega_N)$ defined in equations \eqref{muDelta} and \eqref{muomegaN} respectively. For any $\beta' \in [0, 1)$, let 
\begin{equation}
\label{vareps}
	\varepsilon' =  1 - \sqrt[N]{1-\beta'}.
\end{equation}
Then, with probability at least $\beta'$, 
\begin{equation}
\label{bound2ndtheorem}
	\eta^*(\Delta) \leq \frac{\eta^*(\omega_N)}{\delta(\varepsilon'm|V|/2)}.
\end{equation}
\end{thm}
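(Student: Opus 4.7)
The plan is to exhibit, with high probability, a sample in $\omega_N$ whose input is close to the top right singular direction of a worst-case mode, so that the corresponding $\|A_{\sigma_i} x_i\|$ lower-bounds $\mathcal{A}(\mathbf{\Sigma})$ up to the geometric factor $\delta(\cdot)$ appearing in the spherical cap definition.

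Concretely, I first pick $\sigma^* \in \{1,\dots,m\}$ and $x^* \in \mathbb{S}$ with $\|A_{\sigma^*} x^*\| = \|A_{\sigma^*}\| = \mathcal{A}(\mathbf{\Sigma}) = \eta^*(\Delta)$. Since $x^*$ is a top right singular vector of $A_{\sigma^*}$, one has $A_{\sigma^*}^T A_{\sigma^*} x^* = \|A_{\sigma^*}\|^2 x^*$. Decomposing any $x \in \mathbb{S}$ as $x = (x^T x^*)\,x^* + y$ with $y \perp x^*$, the cross term of $\|A_{\sigma^*} x\|^2$ vanishes and yields the key inequality $\|A_{\sigma^*} x\| \geq |x^T x^*|\,\|A_{\sigma^*}\|$. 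This pointwise comparison is the analytic core of the proof.

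I then introduce the symmetric double cap $\mathcal{D}_\varepsilon := \mathcal{C}(x^*,\varepsilon) \cup \mathcal{C}(-x^*,\varepsilon) = \{x \in \mathbb{S} : |x^T x^*| > \delta(\varepsilon)\}$, whose normalized surface measure equals $2\varepsilon$. For any $x \in \mathcal{D}_\varepsilon$ the previous bound gives $\|A_{\sigma^*} x\| > \delta(\varepsilon)\,\mathcal{A}(\mathbf{\Sigma})$. Since $x_i$ and the edge $(u_i,v_i,\sigma_i)$ are drawn independently and uniformly, and since label $\sigma^*$ occurs with probability at least $1/(m|V|)$ as recalled after Proposition~\ref{dist}, a single draw satisfies $x_i \in \mathcal{D}_\varepsilon$ and $\sigma_i = \sigma^*$ with probability at least $2\varepsilon/(m|V|)$. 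By independence over the $N$ draws, the probability that no sample fulfils both is at most $(1 - 2\varepsilon/(m|V|))^N$. Whenever some sample does, feasibility of $\eta^*(\omega_N)$ at that sample gives $\delta(\varepsilon)\,\mathcal{A}(\mathbf{\Sigma}) < \|A_{\sigma^*} x_i\| \leq \eta^*(\omega_N)$, hence $\eta^*(\Delta) \leq \eta^*(\omega_N)/\delta(\varepsilon)$.

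To close, I set $\varepsilon = \varepsilon' m|V|/2$ with $\varepsilon' = 1 - \sqrt[N]{1-\beta'}$, which makes the failure probability equal $(1 - \varepsilon')^N = 1 - \beta'$ and produces exactly \eqref{bound2ndtheorem}. The main obstacle, and the reason the ``chance-constrained'' route of \cite{berger} is sidestepped, is locating the factor $1/2$ inside $\delta(\cdot)$: a single spherical cap would only deliver the weaker bound $\eta^*(\omega_N)/\delta(m|V|\varepsilon')$; to recover $\delta(m|V|\varepsilon'/2)$ one must exploit the antipodal symmetry $\|A_{\sigma^*}(-x^*)\| = \|A_{\sigma^*} x^*\|$ and work with the double cap, whose measure is twice that of a single cap.
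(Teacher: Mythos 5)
Your proof is correct, and it reaches \eqref{bound2ndtheorem} by a genuinely different probabilistic route from the paper's, while sharing the same geometric core. Both arguments ultimately rest on the inequality $\|A_{\sigma^*}x\| \geq |x^Tx^*|\,\|A_{\sigma^*}\|$ for a top right singular vector $x^*$, hence on the double cap of measure $2\varepsilon$ around $\pm x^*$ being contained in $\{x \in \mathbb{S} : \|A_{\sigma^*}x\| > \delta(\varepsilon)\,\mathcal{A}(\mathbf{\Sigma})\}$ (this is exactly the step the paper delegates to \cite[Theorem~15]{ken}), and both use the same $1/(m|V|)$ lower bound on the probability of drawing an edge carrying a prescribed label. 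Where you differ is the probabilistic bookkeeping. The paper argues \emph{a posteriori}: it introduces the violating set $V(\eta)$ and its measure $f$, proves the scenario-type identity $\mathbb{P}^N[f(\eta^*(\omega_N)) \leq \varepsilon'] = 1-(1-\varepsilon')^N$, converts the resulting bound on $\mathbb{P}[V(\eta^*(\omega_N))]$ into a bound of $\varepsilon' m |V|$ on the spherical measure of the projected violating set, and only then applies the cap argument by contradiction. You argue \emph{a priori}: you fix a deterministic target region (the double cap $\mathcal{D}_\varepsilon$ paired with any edge labelled $\sigma^*$) before seeing the data, bound the probability that at least one of the $N$ i.i.d.\ samples hits it by $1-(1-\varepsilon')^N$, and observe that any hitting sample directly certifies the bound. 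Your version is more self-contained and sidesteps the need for a level $\eta_{\varepsilon'}$ with $f(\eta_{\varepsilon'})=\varepsilon'$ to exist, a point the paper's proof passes over quickly when asserting that $f$ ``has $[0,1]$ as codomain'' and which is close in spirit to the non-degeneracy assumption \cite[Assumption~8]{berger} that the theorem is explicitly designed to avoid. What the paper's violating-set formulation buys in exchange is generality: it is the standard scenario-optimization template and does not require the worst case to be attained at a single known point, whereas your argument exploits the fact that $\mathcal{A}(\mathbf{\Sigma})$ is realized by one specific pair $(x^*,\sigma^*)$. For the present statement the two routes yield identical constants.
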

\begin{proof}
Let the \emph{violating set} $V(\eta) := \{ (x, (u, v, \sigma)) \in \Delta : \|A_\sigma x\| > \eta \}$, and let $f: \mathbb{R} \to [0, 1]: \eta \mapsto f(\eta) = \mathbb{P}[V(\eta)]$ be its measure. Note that $f$ is decreasing. For any $\varepsilon' \in [0, 1]$, we start by showing the following equation: 
\begin{equation}
\label{omegaN}
	\mathbb{P}^N[\omega_N \subset \Delta: f(\eta^*(\omega_N)) \leq \varepsilon'] = 1 - (1-\varepsilon')^N.
\end{equation}
Consider one sampled constraint $d \in \Delta$, and let $\eta_{\varepsilon'} \in \mathbb{R}$ be such that $f(\eta_{\varepsilon'}) = \varepsilon'$. Then $\mathbb{P}[d \in \Delta: f(\eta^*(\{d\})) > \varepsilon'] = \mathbb{P}[d \in \Delta: f(\eta^*(\{d\})) > f(\eta_{\varepsilon'})]$. Since $f$ is decreasing and has $[0, 1]$ as codomain, $\mathbb{P}[d \in \Delta: f(\eta^*(\{d\})) > f(\eta_{\varepsilon'})] = 1-\varepsilon'$, hence the following holds: 
\begin{equation}
	\mathbb{P}[d \in \Delta: f(\eta^*(\{d\})) > \varepsilon'] = 1-\varepsilon'.
\end{equation} 
Since samples in $\omega_N$ are i.i.d., the following holds: 
\begin{equation}
\begin{aligned}
	&\mathbb{P}^N[\omega_N \subset \Delta: f(\eta^*(\omega_N)) > \varepsilon']  \\
	=\,& \left(\mathbb{P}[d \in \Delta: f(\eta^*(\{d\})) > \varepsilon']\right)^N \\
	=\,& (1-\varepsilon')^N, 
\end{aligned}
\end{equation}
which is equivalent to Equation~\eqref{omegaN}.

Now, define the projected violating set $\tilde{\mathbb{S}} \subseteq \mathbb{S}$ as follows:
\begin{equation}
	\tilde{\mathbb{S}} = \{ x \in \mathbb{S} : \exists (u, v, \sigma) \in E, \|A_\sigma x\| > \eta^*(\omega_N)\}.
\end{equation}
For any $(u, v, \sigma) \in E$, we define:
\begin{align}
	\tilde{\mathbb{S}}_{(u, v, \sigma)} = \{ x \in \mathbb{S} : \|A_\sigma x\| > \eta^*(\omega_N)\}.
\end{align}
Thus, $\tilde{\mathbb{S}} = \cup_{(u, v, \sigma) \in E} \tilde{\mathbb{S}}_{(u, v, \sigma)}$. In the worst case, the sets $\{\tilde{\mathbb{S}}_{(u, v, \sigma)} \}$ are disjoint. In this case, $\mathbb{P}_x[\tilde{\mathbb{S}} ]= \sum_{(u, v, \sigma) \in E} \mathbb{P}_x[\tilde{\mathbb{S}}_{(u, v, \sigma)}]$ and 
\begin{equation}
\begin{aligned}
\mathbb{P}[V(\eta)] =& \sum_{(u, v, \sigma) \in E} \mathbb{P}_x[\tilde{\mathbb{S}}_{(u, v, \sigma)}] \mathbb{P}_E [\{(u, v, \sigma)\}]\\
	\ge & \frac{1}{m|V|}\sum_{(u, v, \sigma) \in E} \mathbb{P}_x[\tilde{\mathbb{S}}_{(u, v, \sigma)}] =\frac{\mathbb{P}_x[\tilde{\mathbb{S}}]}{m|V|},
\end{aligned}
\end{equation}
where $\mathbb{P}_x$ and $\mathbb{P}_E$ denote the uniform (probability) measure on $\mathbb{S}$ and $E$ respectively. This means that $\mathbb{P}[V(\eta)] \le \varepsilon'$ implies $\mathbb{P}_x[\tilde{\mathbb{S}}] \le  \varepsilon' m |V| $.

The rest of the proof follows the same lines as the proof of \cite[Theorem~15]{ken}.
\end{proof}

Theorem~\ref{2ndtheorem} allows us to directly derive the following corollary: 
\begin{cor}
\label{cor}
Consider the program $\mathcal{P}(\Delta)$ defined in \eqref{PDelta} for the CSLS $S(\mathbf{G}(V, E), \mathbf{\Sigma})$ with optimal cost $\gamma^*(\Delta)$. Let $\omega_N$ be a set of $N$ samples from $\Delta$ as explained in Section~\ref{formulation}, with $N \geq |V|n(n+1)/2$. Consider the sampled program $\mathcal{P}(\omega_N)$ defined in \eqref{PomegaN} with solution $\gamma^*(\omega_N)$ and $\{ P_u^*(\omega_N), u \in V \}$. For any $\beta, \beta' \in [0, 1)$, let 
\begin{equation}
	\varepsilon = m |V| \left( 1 - \sqrt[N]{\frac{2(1-\beta)}{|V|n(n+1)}} \right), 
\end{equation}
and
\begin{equation}
	\varepsilon' = \frac{m}{2} \left( 1 - \sqrt[N]{1-\beta'} \right).
\end{equation}
Then, with probability at least $\beta + \beta' - 1$, 
\begin{equation}
\label{eqcoro}
\begin{aligned}
	&\rho(\mathbf{G}, \mathbf{\Sigma}) 	\leq  \gamma^*(\omega_N) \, + \\
					& \max_{(x, (u, v, \sigma)) \in \omega_N} \left\{ \sqrt{\frac{\lambda_{\max}^u}{\lambda_{\min}^u}} \gamma^*(\omega_N) + \sqrt{\frac{\lambda_{\max}^v}{\lambda_{\min}^u}} \frac{\eta^*(\omega_N)}{\delta(\varepsilon')} \right\} d(\varepsilon), 
\end{aligned}
\end{equation}
with $d(\varepsilon) = \sqrt{2 - 2\delta(\varepsilon)}$, $\lambda_{\min}^u$ and $\lambda_{\max}^u$ respectively the minimal and maximal eigenvalue of $P^*_u(\omega_N)$
\end{cor}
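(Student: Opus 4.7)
The plan is to chain three ingredients that are already available. First, Proposition~\ref{mqlf} gives the deterministic inequality $\rho(\mathbf{G}, \mathbf{\Sigma}) \le \gamma^*(\Delta)$, so it suffices to upper-bound $\gamma^*(\Delta)$ by the right-hand side of \eqref{eqcoro}. Second, Theorem~\ref{1sttheorem} applied with the stated $\varepsilon$ yields, with probability at least $\beta$, the sample-dependent bound \eqref{bound1sttheorem} on $\gamma^*(\Delta)$; this bound still involves the quantity $\mathcal{A}(\mathbf{\Sigma})$, which a black-box observer cannot compute. Third, the reformulation \eqref{muDelta} identifies $\mathcal{A}(\mathbf{\Sigma}) = \eta^*(\Delta)$, so Theorem~\ref{2ndtheorem} applied with the stated $\varepsilon'$ yields $\mathcal{A}(\mathbf{\Sigma}) \le \eta^*(\omega_N)/\delta(\varepsilon')$ with probability at least $\beta'$, up to the usual algebraic rescaling of $\varepsilon'$ absorbing the factor coming from the $|V|,m$-dependent marginalisation in Theorem~\ref{2ndtheorem}.

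To merge the two probabilistic statements I would invoke a Bonferroni-type union bound on the failure events. Denoting by $\mathcal{E}_1$ the event that the conclusion of Theorem~\ref{1sttheorem} holds and by $\mathcal{E}_2$ the event that the conclusion of Theorem~\ref{2ndtheorem} holds,
\begin{equation*}
\mathbb{P}(\mathcal{E}_1 \cap \mathcal{E}_2) \;\ge\; 1 - \mathbb{P}(\mathcal{E}_1^c) - \mathbb{P}(\mathcal{E}_2^c) \;\ge\; \beta + \beta' - 1.
\end{equation*}
On this joint event the substitution of the bound from $\mathcal{E}_2$ in place of $\mathcal{A}(\mathbf{\Sigma})$ inside the right-hand side of \eqref{bound1sttheorem} preserves the inequality, because the coefficient $\sqrt{\lambda_{\max}^v/\lambda_{\min}^u}$ of $\mathcal{A}(\mathbf{\Sigma})$ inside the maximum is nonnegative. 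Combining this substitution with $\rho(\mathbf{G}, \mathbf{\Sigma}) \le \gamma^*(\Delta)$ yields exactly the inequality \eqref{eqcoro}.

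The main thing to watch is that both bounds are certified on the \emph{same} underlying random sample $\omega_N$, so the events $\mathcal{E}_1$ and $\mathcal{E}_2$ are in general statistically dependent and their joint probability cannot be taken as a product. This is precisely why the Bonferroni inequality is the appropriate tool here: it is distribution-free and gives the additive confidence $\beta+\beta'-1$ without any independence assumption. Apart from this, the proof is almost mechanical; the only bookkeeping is to verify that the argument of $\delta(\cdot)$ obtained after substitution matches the definition of $\varepsilon'$ chosen in the corollary statement, but no fresh probabilistic estimate is required.
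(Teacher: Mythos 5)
Your proof is correct and takes essentially the same route as the paper: Proposition~\ref{mqlf} reduces \eqref{eqcoro} to the simultaneous validity of \eqref{bound1sttheorem} and \eqref{bound2ndtheorem}, and the Bonferroni/union bound on the two failure events yields the confidence $\beta+\beta'-1$, exactly as in the paper's own argument. Your observation that the two events live on the same sample $\omega_N$ and are therefore dependent --- so that one must add failure probabilities rather than multiply success probabilities --- is precisely the (implicit) justification behind the paper's inequality chain.
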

\begin{proof}
Following Proposition~\ref{mqlf}, Equation~\eqref{eqcoro} holds if Equation~\eqref{bound1sttheorem} and Equation~\eqref{bound2ndtheorem} both hold. Theorem~\ref{1sttheorem} states that Equation~\eqref{bound1sttheorem} holds with probability $\beta$, and Theorem~\ref{2ndtheorem} states that Equation~\eqref{bound2ndtheorem} holds with probability $\beta'$. Thus
\begin{equation}
\begin{aligned}
	&\mathbb{P}^N [\omega_N \subset \Delta : \textrm{\eqref{bound1sttheorem} and \eqref{bound2ndtheorem} hold}] \\
	= \, & 1 - \mathbb{P}^N [\omega_N \subset \Delta : \textrm{\eqref{bound1sttheorem} or \eqref{bound2ndtheorem} does not hold}] \\
	\geq \, & 1 - \mathbb{P}^N [\omega_N \subset \Delta : \textrm{\eqref{bound1sttheorem} does not hold}] \\
	&\quad \quad  \quad -  \mathbb{P}^N [\omega_N \subset \Delta : \textrm{\eqref{bound2ndtheorem} does not hold}] \\
	\geq \, & 1 - (1 - \beta) - (1 - \beta') \\
	= \, & \beta + \beta' - 1, 
\end{aligned}
\end{equation}
which concludes the proof.
\end{proof}

\section{\textsc{Numerical experiments}}
\label{numerical}

Let us consider the CSLS $S(\mathbf{G}, \mathbf{\Sigma})$ introduced in Example~\ref{example}. Using the CJSR white-box approximation method introduced in \cite{PHILIPPE2016242}, we know that the true CJSR $\rho(\mathbf{G}, \mathbf{\Sigma}) \approx 0.48741$.

The simulations are the following: for different values of $N$, we sample $N$ observations as explained in Section~\ref{formulation}. We then compute the optimal variables $\gamma^*(\omega_N)$ and $\{ P^*_u(\omega_N), u \in V \}$ of the problem $\mathcal{P}(\omega_N)$ defined in Equation~\eqref{PomegaN}. From these variables, we compute the lower and upper bounds expressed in Proposition~\ref{lowerbound} and Corollary~\ref{cor}. We provide the results for the example described above in Figure~\ref{exp} for an increasing number $N$ of sampled points i.e. $N \in [1, 50 000]$.

\begin{figure}[h]
\label{exp}
\includegraphics[width = \linewidth]{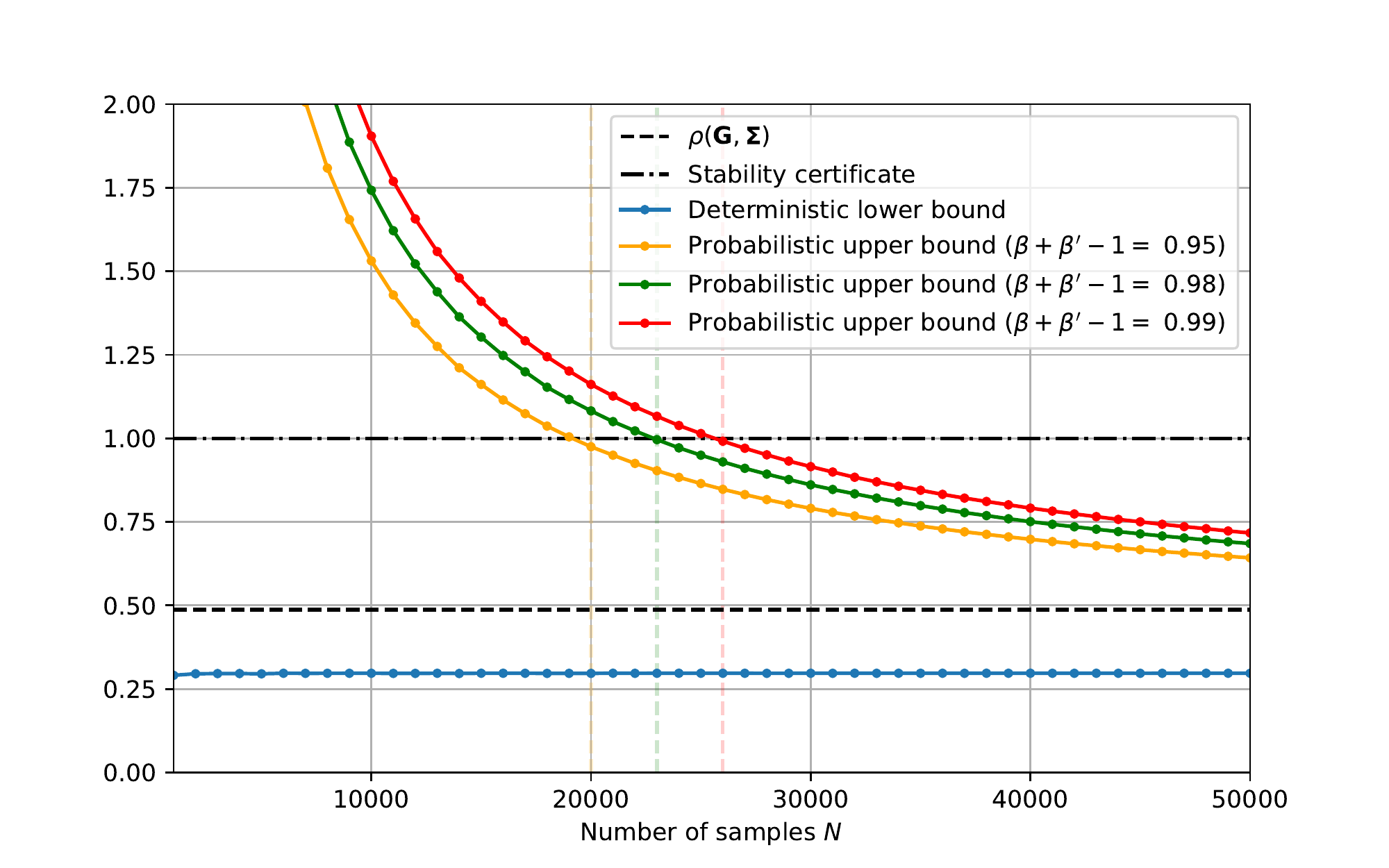}
\caption{Lower and upper bounds derived in Proposition~\ref{lowerbound} and Corollary~\ref{cor} for an increasing number of samples $N$, with confidence levels $\beta + \beta' - 1 \in \{0.95, 0.98, 0.99\}$.}
\end{figure}

We observe that the lower bound fastly converges to a conservative value. We recall though that this lower bound is deterministic. Concerning the upper bounds, we notice that an upper bound becomes tighter for larger values $N$, the number of samples. We also observe that, as expected, the cost of a tighter bound is a smaller confidence level. Indeed, one can see on Figure~\ref{exp} that the bound is tighter for small values of $\beta + \beta' - 1$. We can finally observe that one needs less samples to have stability guarantee (according to Proposition~\ref{stabcertif}), for smaller confidence levels. One needs respectively 20000, 23000 and 26000 samples to have stability guarantee for the considered CSLS with confidence levels of respectively 95\%, 98\% and 99\%.

\section{\textsc{Conclusion}}
In this work, we leveraged approaches such as \emph{scenario optimization} and \emph{sensitivity analysis} to propose a method providing probabilistic guarantees on the stability of an unknown CSLS. We used the CJSR as a tool to approximate the black-box stability of CSLS. In particular, we provided a deterministic lower bound on the CJSR, as well as a probabilistic upper bound on it. We showed that we obtain tighter approximations of the CJSR for a large number of samples, but also for smaller confidence levels. Finally, we demonstrated that the theory holds by applying it to an academic example.

Our work, and our findings, follow the previous work of \cite{ken, berger, RUBBENS202167}. Compared with this previous body of work, we believe that our contribution achieves an important step towards practical applications, and in particular towards hybrid automata and cyber-physical systems. In the future, we plan to pursue further this direction, for instance by considering more involved models of hybrid systems, and by refining our bounds.

\appendix
\subsection{Proof of Proposition~\ref{cardinality}}
\label{appProof}
\begin{proof}
First, from the arguments in \cite[Lemma~1]{RUBBENS202167}, we claim that there exists $\omega \subset \Delta$ with $|\omega| = |V| n(n+1)/2 + 1$ such that $\gamma^*(\omega) = \gamma^*(\Delta)$. Now, we consider the problem $\mathcal{P}(\omega)$ as defined in \eqref{PomegaN}. With a similar argument as the one in \cite[Theorem~2]{berger}, we can conclude that the objective remains unchanged removing one of the points in $\omega$.
\end{proof}

\bibliographystyle{IEEEtran}
\bibliography{main}

\end{document}